\newtheorem{theorem}{Theorem}
\newtheorem{remark}{Remark}
\newtheorem{prop}{Proposition}
\newtheorem{problem}{Problem}
\newcommand{\R}{\mathbb{R}}
\newcommand{\C}{\mathbb{C}}
\DeclareMathOperator{\tr}{tr}
\newcommand{\INLINEIF}[2]{\ifthenelse{#1}{#2}{}}
\newcommand{\tblack}{\textcolor{black}}
\begin{document}
\title{\tblack{Data-Driven Regularized Time-Limited $h^2$ Model Reduction from Noisy Impulse Responses}}
\author{Hiroki Sakamoto and Kazuhiro Sato\thanks{H. Sakamoto and K. Sato are with the Department of Mathematical Informatics, Graduate School of Information Science and Technology, The University of Tokyo, Tokyo 113-8656, Japan, email: shiroki2875@gmail.com (H. Sakamoto), kazuhiro@mist.i.u-tokyo.ac.jp (K. Sato) }}

\maketitle
\thispagestyle{empty}
\pagestyle{empty}


\begin{abstract}
This paper develops a data-driven time-limited $h^2$ model reduction method for discrete-time linear time-invariant systems. Specifically, we formulate and solve a \tblack{regularized} time-limited $h^2$ model reduction problem using only \tblack{noisy} impulse response data.
Furthermore, we show that the objective function and its gradient can be represented using only \tblack{noisy} impulse response data.
Numerical experiments using SLICOT benchmarks demonstrate that the proposed regularized method 
\tblack{achieves lower relative time-limited $h^2$ errors than the tested alternatives} 
and is effective in situations where the unregularized method may deteriorate under noise.
\end{abstract}

\begin{IEEEkeywords}
Time-limited model reduction, Data-driven ROMs, $h^2$ optimal model reduction
\end{IEEEkeywords}

\IEEEpeerreviewmaketitle

\section{Introduction} \label{sec:intro}
\IEEEPARstart{I}{n} engineering systems, the behavior over a finite horizon can govern system performance. In such cases, the input--output behavior over the finite horizon is evaluated using time-limited Gramians and finite horizon norms~\cite{gawronski1990model}.
In particular, when dealing with large-scale and complex systems, model order reduction (MOR) techniques are required to construct reduced-order models (ROMs)~\cite{antoulas2005approximation, gugercin2008h_2}.

In MOR for systems over a finite horizon, performance measures are defined on the finite horizon, and the goal is to construct a reduced-order system that preserves the input--output behavior over that horizon.
Well-known model-based approaches (i.e., system matrices available) for such systems include time-limited balanced truncation \cite{kurschner2018balanced} and time-limited $h^2$ (or $\mathcal{H}^2$) MOR \cite{goyal2019time, das2022h, sakamoto2025compression}.
On the other hand, in practice, one often encounters situations where the system matrices are unknown and only measured data are available, or where system identification~\cite{ljung1999systemID, pillonetto2010new} is difficult; in such cases, a data-driven framework is required, as shown in Figure~\ref{fig:our_approach}.

\tblack{Representative data-driven MOR methods for linear time-invariant (LTI) systems include the Loewner framework, which constructs interpolatory models from frequency-response data~\cite{mayo2007framework}, the eigensystem realization algorithm (ERA), which constructs a realization from impulse response data~\cite{juang1985eigensystem, H_Tu_2014}, and data-driven balanced truncation, which approximately extracts controllability and observability information from input--output data~\cite{gosea2022data_2}.
Furthermore, from the viewpoint of $h^2$-optimal approximation, realization-independent approaches~\cite{beattie2012realization}, time-domain data-driven approaches based on IRKA~\cite{ackermann2025time}, and recent methods that construct $h^2$-optimal ROMs by generating transfer-function information offline from time-domain or frequency-domain data have also been proposed~\cite{zulfiqar2024data}. 
In particular, when only impulse response data are available, ERA~\cite{juang1985eigensystem, H_Tu_2014} is a standard choice, but it does not generally guarantee an optimal ROM in the sense of the time-limited $h^2$ norm and may overfit to data noise.
}

\begin{figure}[htbp]
     \centering
     \includegraphics[width=0.8\linewidth]{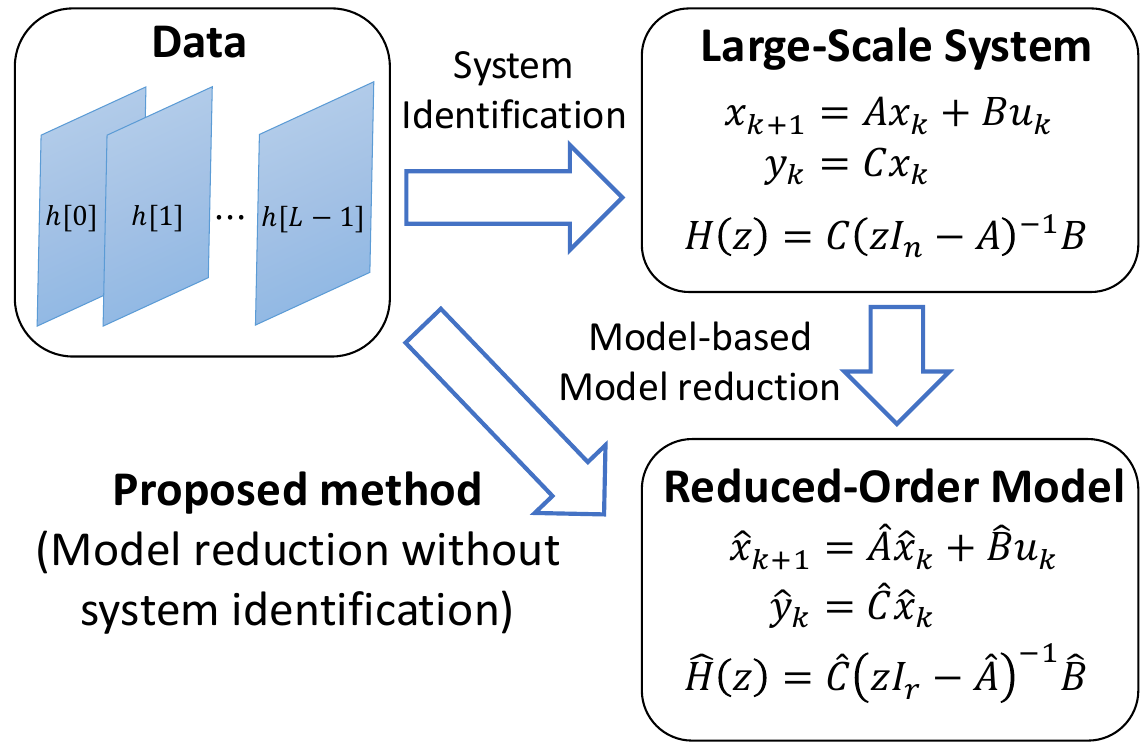}
     \caption{
     Overview of the proposed data-driven time-limited $h^2$ model reduction framework.\label{fig:our_approach}}
\end{figure}

Motivated by this limitation, in this paper, we consider a framework that directly solves the time-limited $h^2$ MOR problem using only noisy impulse response data, as shown in Figure~\ref{fig:our_approach}.
Unlike realization-independent or transfer-function-based data-driven $h^2$ MOR methods~\cite{beattie2012realization, ackermann2025time, zulfiqar2024data}, the proposed method directly optimizes a regularized time-limited $h^2$ objective using only noisy impulse response data.
In particular, the regularization introduced in this paper is inspired by kernel-based system identification, where kernels are used to encode prior information such as smoothness and stability of impulse responses~\cite{pillonetto2010new}. 
Accordingly, the contributions of this paper are summarized as follows:
\begin{itemize}
    \item \tblack{We formulate a regularized time-limited $h^2$ MOR problem for noisy impulse response data.}
    \item \tblack{We show that the resulting objective function and its gradient can be represented using only impulse response data, without using the system matrices of the original high-dimensional system.}
    \item \tblack{Through numerical experiments, we demonstrate the effectiveness of the proposed regularized method in situations where the performance of the unregularized method may deteriorate under noise.}
\end{itemize}

The remainder of this paper is organized as follows.
Section~\ref{sec:background} reviews the model-based time-limited $h^2$ model reduction method using the original system matrices.
Section~\ref{sec:formulation} formulates the data-driven time-limited $h^2$ model reduction problem using only noisy impulse response data.
Section~\ref{sec:method} shows that the gradient of the data-driven time-limited $h^2$ model reduction problem can be represented using the noisy impulse response data, and Section~\ref{sec:algorithm} introduces the corresponding optimization algorithm.
Section~\ref{sec:experiments} validates the proposed method using SLICOT benchmarks.
Section~\ref{sec:conclusion} concludes the paper.

\paragraph*{Notation}
For a vector $a$, $\|a\|$ denotes the Euclidean norm.
For a finite horizon $L$, define 
$\|x\|_{\ell^{2}_{L}}:=\big(\sum_{k=0}^{L-1}\|x_k\|^2\big)^{1/2}$ and 
$\|x\|_{\ell^{\infty}_{L}}:=\max_{0\le k\le L-1}\|x_k\|$.

For a matrix $A$, we denote the Frobenius norm, transpose, trace, the spectral radius, vectorization, and $(i,j)$th entry
by $\|A\|$, $A^{\top}$, $\tr A$, $\rho(A)$, $\operatorname{vec}(A)$, and $(A)_{i,j}$, respectively.

For a real-valued function $f$, $\mathrm{d}f$ denotes its (total) differential; for a perturbation $\mathrm{d}\theta$ of $\theta$, we write $\mathrm{d}f=\langle\nabla_{\theta}f,\mathrm{d}\theta\rangle$, where the inner product is $\langle A,B\rangle:=\mathrm{tr}(A^{\top}B)$.


\section{Model-based Time-Limited \texorpdfstring{$h^2$}{h2} MOR for LTI systems}\label{sec:background}

\subsection{Discrete-Time LTI systems over a finite horizon}\label{subsec:tlLTI}
Consider the discrete-time LTI system
\begin{empheq}[left=\empheqlbrace]{equation}
    \begin{aligned}
    x_{k+1} &= Ax_{k} + Bu_{k}, \\
    y_{k} &= Cx_{k},
    \end{aligned}
\label{eq:dt_lti}
\end{empheq}
where $x_k\in\mathbb{R}^{n}$, $u_k\in\mathbb{R}^{m}$, and $y_k\in\mathbb{R}^{p}$ denote the state, input, and output at time $k\in\mathbb{Z}_{\ge 0}$, respectively, and the matrices $A\in\mathbb{R}^{n\times n}$, $B\in\mathbb{R}^{n\times m}$, and $C\in\mathbb{R}^{p\times n}$ are constant system matrices. 
The impulse response (Markov parameters) and the transfer function of system~\eqref{eq:dt_lti} are given by
\begin{align}
    h[k] &:= C A^{k} B \in \mathbb{R}^{p\times m}, \label{eq:impulse_def}\\
    H(z) &:= C(zI-A)^{-1}B\in \C^{p\times m}. \label{eq:tf_H}    
\end{align}
Using the impulse response sequence $\{h[k]\}$ and assuming $x_0=0$, the input--output relation of system~\eqref{eq:dt_lti} over a finite horizon can be written as the convolution $y_{k+1} = \sum_{j=0}^{k} h[j]\: u_{k-j}$.

Under the assumption that system~\eqref{eq:dt_lti} is asymptotically stable, i.e., $A\in\R^{n\times n}$ is Schur stable ($\rho(A) < 1$), the $h^2$ norm of system~\eqref{eq:dt_lti} is defined as $\|H\|_{h^2}^2:= \sum_{k=0}^{\infty} \|h[k]\|^2$.
Following~\cite{goyal2019time,sakamoto2025compression}, the time-limited $h^2$ norm over the finite horizon $\{0,1,\dots,L-1\}$ is defined as
\begin{equation}
    \|H\|_{h^2_L}^2
    := \sum_{k=0}^{L-1} \|h[k]\|^2.
    \label{eq:tlH2_norm}
\end{equation}
Note that \eqref{eq:tlH2_norm} is well-defined even if $A$ is not Schur stable.

Next, we consider a ROM of system~\eqref{eq:dt_lti}:
\begin{empheq}[left=\empheqlbrace]{equation}
    \begin{aligned}
    \hat{x}_{k+1} &= \hat{A}\hat{x}_{k} + \hat{B}u_{k}, \\
    \hat{y}_{k} &= \hat{C}\hat{x}_{k},
    \end{aligned}
\label{eq:dt_r_lti}
\end{empheq}
with transfer function $\hat{H}(z) := \hat{C}(zI-\hat{A})^{-1}\hat{B}$,
where $\hat{x}_{k}\in \R^r$, $\hat{y}_{k}\in \R^{p}$, $\hat A\in\mathbb{R}^{r\times r}$, $\hat B\in\mathbb{R}^{r\times m}$, $\hat C\in\mathbb{R}^{p\times r}$, and $r\ll n$.
For the two systems~\eqref{eq:dt_lti} and \eqref{eq:dt_r_lti}, the following input--output error bound holds~\cite{goyal2019time}:
\begin{align}
    \|y-\hat y\|_{\ell_{L}^{\infty}}
    \le \|H-\hat H\|_{h^{2}_{L}}\cdot\|u\|_{\ell_{L}^{2}}.
    \label{eq:dt_lti_relation_finite}
\end{align}
Inequality~\eqref{eq:dt_lti_relation_finite} implies that, for inputs with bounded energy over the horizon, a sufficiently small time-limited $\|H-\hat H\|_{h^{2}_{L}}$ error guarantees that the output error $\|y-\hat y\|_{\ell^\infty_{L}}$ becomes correspondingly small.

\subsection{Time-Limited \texorpdfstring{$h^2$}{h2} MOR}\label{subsec:tlh2MOR}
Motivated by the input--output inequality~\eqref{eq:dt_lti_relation_finite}, we formulate the time-limited $h^2$ model reduction problem as follows:
\begin{equation}\label{prob:dt_lti_tlh2}
    \min_{\hat \theta\in \mathcal E}\ \|H-\hat{H}\|_{h^{2}_{L}}^2=\sum_{k=0}^{L-1} \|CA^k B\|^2 + f_{L}(\hat{\theta}),
\end{equation}
where $\hat\theta:=(\hat A, \hat B, \hat C)$ and $\mathcal E:=\R^{r\times r}\times \R^{r\times m}\times \R^{p\times r}$. Furthermore,
\begin{align}
    f_{L}(\hat\theta)
    &:=\tr (\hat C P_L \hat C^{\top}) - 2\tr (CR_L \hat C^{\top}) \\
    &=\tr(\hat B^{\top} Q_L \hat{B}) + 2\tr (B^{\top}S_{L}\hat{B}),
\end{align}
and 
\begin{align}
    P_L&:= \sum_{k=0}^{L-1}\hat{A}^{k}\hat{B}\hat{B}^{\top}(\hat{A}^{\top})^k, \label{eq:dt_tlLyap_PL}\\
    Q_L&:= \sum_{k=0}^{L-1}(\hat{A}^{\top})^k \hat{C}^{\top}\hat{C}\hat{A}^{k} \label{eq:dt_tlLyap_QL}\\
    R_L&:= \sum_{k=0}^{L-1}A^{k}B\hat{B}^{\top}(\hat{A}^{\top})^k, \label{eq:dt_tlSylve_RL}\\
    S_L&:=- \sum_{k=0}^{L-1}(A^{\top})^k C^{\top}\hat{C}\hat{A}^{k}. \label{eq:dt_tlSylve_SL}
\end{align}
Therefore, Problem~\eqref{prob:dt_lti_tlh2} can be equivalently rewritten as
\begin{equation}\label{prob:dt_lti_tlh2_mod}
    \min_{\hat{\theta}\in \mathcal E}\ f_{L}(\hat\theta).
\end{equation}

To solve Problem~\eqref{prob:dt_lti_tlh2_mod}, we recall the gradient formulas. 
For the objective function $f_{L}(\hat \theta)$, we denote the gradients with respect to $\hat A$, $\hat B$, and $\hat C$ as $\nabla_{\hat A}f_L$, $\nabla_{\hat B}f_L$, and $\nabla_{\hat C}f_L$, respectively.
\begin{prop}\label{prop:tl_gradients}
    Define 
    \begin{align}
        M &:= \sum_{j=0}^{L-1}(\hat A^\top)^{j}\,(R^\top (A^\top)^L C^\top \hat{C} \\
        &\quad\qquad- P(\hat{A}^{\top})^L \hat{C}^{\top}\hat{C})\,(\hat A^\top)^{L-1-j}.
    \end{align}
    Then, the gradients of $f_{L}(\hat\theta)$ are
    \begin{align}
        \nabla_{\hat A}f_L &= 2(Q_L\,\hat{A}\,P + S_{L}^\top A R + M) \\
        \nabla_{\hat B}f_L &= 2(Q_L\hat B + S_{L}^{\top}B) \\
        \nabla_{\hat C}f_L &= 2(\hat C P_L - CR_L), \\
    \end{align}
    where $P_L$, $Q_L$, $R_L$, and $S_L$ are defined in~\eqref{eq:dt_tlLyap_PL}, \eqref{eq:dt_tlLyap_QL}, \eqref{eq:dt_tlSylve_RL}, and \eqref{eq:dt_tlSylve_SL}.
    $P$ and $R$ are obtained by solving the following discrete-time Lyapunov equation and Sylvester equation
    \begin{align}
        \hat{A}P\hat{A}^{\top} + \hat{B}\hat{B}^{\top} &= P, \label{eq:inf_lyap_P}\\
        AR\hat{A}^{\top} + B\hat{B}^{\top} &= R. \label{eq:inf_sylve_R}
    \end{align}
\end{prop}
\begin{proof}
    The proof follows from~\cite{das2022h}.
\end{proof}

\section{Problem Formulation}\label{sec:formulation}
The goal of this paper is to solve Problem~\eqref{prob:dt_lti_tlh2_mod} using only noisy impulse response data of the system~\eqref{eq:dt_lti}.
Specifically, we consider noisy impulse response data of the form
\begin{align}
    \tilde{h}[k] = h[k] + \eta[k], \qquad k = 0, \ldots, L-1, \label{eq:noisy_data}
\end{align}
where $\eta[k]$ denotes noise. In particular, we assume that impulse response data of length $L$ are given as
\begin{align}
    \mathcal{D}_L
    := \big\{\, \tilde h[0],\,\tilde h[1],\,\dots,\,\tilde h[L-1] \,\big\}.
    \label{eq:impulse_data}
\end{align}

Using the impulse response data~\eqref{eq:impulse_data}, we reformulate Problem~\eqref{prob:dt_lti_tlh2_mod}.
First, multiplying $R_L$ and $S_L$ from the left by $C$ and $B^{\top}$, respectively, yields
\begin{align}
    & C R_L
    = \sum_{k=0}^{L-1} h[k] \hat{B}^{\top} (\hat{A}^{\top})^k
    =: Z_1,
    \label{eq:impulse_Z1}\\
    &B^{\top} S_L
    = -\sum_{k=0}^{L-1} h[k]^{\top} \hat{C} \hat{A}^k
    =: Z_2.
    \label{eq:impulse_Z2}
\end{align}
Therefore, if the exact impulse response data are available, the objective function $f_L(\hat{\theta})$ can be expressed equivalently as
\begin{align}
    f_{L,\mathrm{data}}(\hat{\theta})
    &:= \tr(\hat{C} P_L \hat{C}^{\top}) - 2 \tr(Z_1 \hat{C}^{\top}) \nonumber\\
    &= \tr(\hat{B}^{\top} Q_L \hat{B}) + 2 \tr(Z_2 \hat{B}).
    \label{eq:obj_data_exact}
\end{align}

\tblack{
In our setting, only the noisy data $\mathcal{D}_L$ in~\eqref{eq:impulse_data} are available.
Accordingly, instead of $Z_1$ and $Z_2$, we define
\begin{align}
    \tilde{Z}_1
    &:= \sum_{k=0}^{L-1} \tilde{h}[k] \hat{B}^{\top} (\hat{A}^{\top})^k,
    \label{eq:impulse_Z1_noisy} \\
    \tilde{Z}_2
    &:= -\sum_{k=0}^{L-1} \tilde{h}[k]^{\top} \hat{C} \hat{A}^k.
    \label{eq:impulse_Z2_noisy}
\end{align}
Using these quantities, we consider the data-based objective
\begin{align}
    \tilde{f}_{L,\mathrm{data}}(\hat{\theta})
    &:= \tr(\hat{C} P_L \hat{C}^{\top}) - 2 \tr(\tilde{Z}_1 \hat{C}^{\top}) \nonumber\\
    &= \tr(\hat{B}^{\top} Q_L \hat{B}) + 2 \tr(\tilde{Z}_2 \hat{B}).
    \label{eq:obj_data_noisy}
\end{align}
Note that $\tilde{f}_{L,\mathrm{data}}(\hat{\theta})$ coincides with $f_{L,\mathrm{data}}(\hat{\theta})$ in the noise-free case (i.e., $\eta[k]=0$).
}

\tblack{
If we minimize only $\tilde{f}_{L,\mathrm{data}}(\hat{\theta})$, the resulting ROM may overfit the noise in $\mathcal{D}_L$.
To suppress such overfitting and to incorporate prior information on desirable ROMs, we introduce a regularization term $\mathcal{R}(\hat{\theta})$ and consider the regularized optimization problem
\begin{align}
    \min_{\hat{\theta} \in \mathcal{E}}
    \ J_{L,\lambda}(\hat{\theta})
    := \tilde{f}_{L,\mathrm{data}}(\hat{\theta}) + \lambda \mathcal{R}(\hat{\theta}),
    \label{prob:dt_lti_tlh2_mod_reg}
\end{align}
where $\lambda \geq 0$ is a regularization parameter. 
Here, the regularization term is defined by
\begin{align}
    \mathcal{R}(\hat{\theta})
    := \tr\!\left( \hat{G}^{\top} K^{-1} \hat{G} \right),
\end{align}
where $K \succ 0$ and
\begin{align}
    \hat{G}:=
    \begin{bmatrix}
        \operatorname{vec}(\hat{h}[0])^{\top} \\
        \vdots \\
        \operatorname{vec}(\hat{h}[L-1])^{\top}
    \end{bmatrix}
    \in \mathbb{R}^{L \times pm},
\end{align}
with $\hat{h}[k] := \hat{C}\hat{A}^k\hat{B}$.
The term $\mathcal{R}(\hat{\theta})$ penalizes ROMs whose impulse responses are inconsistent with the prior information encoded by $K$, and thus helps construct a model that does not overfit the noisy data.
In particular, when the TC kernel~\cite{pillonetto2010new} is adopted, the entries of $K$ are given by
\begin{align}
    K_{ij} := \alpha_{K}^{\max(i,j)+1},
    \qquad i,j = 0,1,\ldots,L-1, \label{eq:tc_kernel}
\end{align}
where $0 < \alpha_{K} < 1$.
This kernel encodes the prior that the impulse response decays smoothly, which is suitable for constructing a ROM that reflects the expected structure of the true system while being robust to noise.
}

\tblack{
Thus, the problem addressed in this paper is stated as follows.
\begin{problem}\label{prob:impulse_data_driven_h2}
    Develop an algorithm for solving the regularized time-limited $h^2$ MOR problem~\eqref{prob:dt_lti_tlh2_mod_reg} in a data-driven manner.
    Specifically, using the noisy impulse response data $\mathcal{D}_L$, construct a ROM that reflects prior information and avoids overfitting to noise, under the condition that the matrices $A$, $B$, and $C$ in~\eqref{eq:dt_lti} are unknown.
\end{problem}
Note that Proposition~\ref{prop:tl_gradients} provides gradient formulas for the original time-limited $h^2$ objective.
However, when $A$, $B$, and $C$ are unknown, these gradients cannot be computed directly.
Therefore, it is necessary to derive gradient formulas that depend only on the available noisy impulse response data.
\begin{remark}\label{rem:comparison_sysID}
    As an alternative to the proposed method, one may consider combining kernel-based system identification~\cite{pillonetto2010new} with ERA~\cite{juang1985eigensystem,H_Tu_2014} (SysID+ERA). 
    More precisely, from the noisy impulse response data $\mathcal{D}_{L}$, we first construct input--output data, then estimate an impulse response by the kernel-based system identification~\cite{pillonetto2010new}, and finally construct an $r$th-order realization by ERA.
    Thus, SysID+ERA involves two sources of error, namely the impulse response estimation error and the realization error introduced by ERA.
    Even if the identification step is successful, ERA applied to the accurately estimated data does not necessarily yield a ROM that is optimal in the time-limited $h^2$ sense. 
    If the identification step fails, then the subsequent ERA step cannot be expected to construct a desired ROM reflecting prior information, due to overfitting to noise. 
    By contrast, with an appropriate regularization term, 
    \tblack{the proposed method can directly optimize a regularized time-limited $h^2$ objective} 
    even under such circumstances. 
    These points will be examined numerically in Section~\ref{sec:experiments}.
\end{remark}
}

\section{Data-Driven Time-Limited \texorpdfstring{$h^2$}{h2} MOR from Impulse Response Data}

\subsection{Gradients from Impulse Response Data}\label{sec:method}
\tblack{
We reconstruct the gradient of $J_{L,\lambda}$ using the noisy impulse response data.
Here, we denote the gradient of $J_{L,\lambda}$ by
\begin{align}
    \nabla J_{L,\lambda}
    =
    \big(
    \nabla_{\hat A}J_{L,\lambda},
    \nabla_{\hat B}J_{L,\lambda},
    \nabla_{\hat C}J_{L,\lambda}
    \big).
\end{align}
\begin{theorem}\label{thm:tl_gradients}
    Define 
    \begin{align}
    \hat{\xi}[k]:=
        \sum_{j=0}^{L-1}
        \big(K^{-1}\big)_{k+1,j+1}\hat{h}[j].
    \end{align}
    Then, the gradient $\nabla J_{L,\lambda}$ of $J_{L,\lambda}$ is given by
    \begin{align}
        \nabla_{\hat A}J_{L,\lambda}
        &=
        2
        \sum_{k=1}^{L-1}
        \sum_{i=0}^{k-1}
        (\hat A^{k-1-i})^{\top}
        \hat C^{\top}
        \big(
        \hat C\hat A^k\hat B
        -
        \tilde h[k] \\
        &\qquad\qquad\quad+
        \lambda \hat{\xi}[k]
        \big)
        \hat B^{\top}
        (\hat A^i)^{\top},\label{eq:J_grad_Ahat}
        \\
        \nabla_{\hat B}J_{L,\lambda}
        &=
        2\Bigg(Q_L\hat B+\tilde Z_2^{\top}+\lambda\sum_{k=0}^{L-1}(\hat A^k)^{\top}\hat C^{\top}\hat{\xi}[k]
        \Bigg), \label{eq:J_grad_Bhat}\\
        \nabla_{\hat C}J_{L,\lambda}
        &=2\Bigg(\hat C P_L-\tilde Z_1+\lambda\sum_{k=0}^{L-1}\hat{\xi}[k]\hat B^{\top}(\hat A^k)^{\top}\Bigg), \label{eq:J_grad_Chat}
    \end{align}
    where $P_L$, $Q_L$, $\tilde Z_1$, and $\tilde Z_2$ are defined in
    \eqref{eq:dt_tlLyap_PL},
    \eqref{eq:dt_tlLyap_QL},
    \eqref{eq:impulse_Z1_noisy}, and
    \eqref{eq:impulse_Z2_noisy}, respectively.
\end{theorem}
\begin{proof}
    See Appendix.
\end{proof}
}

The gradient characterized by Theorem~\ref{thm:tl_gradients} is a data-based representation of the gradient of the regularized objective $J_{L,\lambda}$.
In particular, in the noise-free case, if equation~\eqref{eq:impulse_def} holds, then the gradient shown in Theorem~\ref{thm:tl_gradients} coincides with the gradient obtained from Proposition~\ref{prop:tl_gradients} applied to the regularized objective.

\begin{remark}\label{rem:comparison_gradient}
In our setting, where $(A,B,C)$ are unavailable but $\mathcal D_L$ is available, using the gradient in Proposition~\ref{prop:tl_gradients} for the data-fitting term requires system identification~\cite{ljung1999systemID}.
From the viewpoints of the additional computational cost (typically of order $\mathcal{O}(n^3)$) and the identification error, it is preferable to use the gradient in Theorem~\ref{thm:tl_gradients}.
Furthermore, since evaluating the gradient in Theorem~\ref{thm:tl_gradients} is independent of $n$, it is also preferable for large-scale systems.
\end{remark}

\subsection{Impulse Response Data-Driven \texorpdfstring{$h^2$}{h2} MOR Algorithm}\label{sec:algorithm}
We present Algorithm~\ref{alg:ABCunknown}, which solves Problem~\eqref{prob:dt_lti_tlh2_mod_reg} using the noisy impulse response data $\mathcal D_L$.
Because Problem~\eqref{prob:dt_lti_tlh2_mod_reg} is unconstrained on $\mathcal E$, if the gradient in Theorem~\ref{thm:tl_gradients} is zero at a point in $\mathcal E$, that point is a stationary point.

We briefly describe Algorithm~\ref{alg:ABCunknown}.
First, we generate the initial reduced-order system matrices
$\hat{\theta}_{1}=(\hat{A}_{(1)}, \hat{B}_{(1)}, \hat{C}_{(1)})\in \mathcal E$
\tblack{using existing methods (see Remark~\ref{rem:how_to_choose_initial_point}).}
At each iteration $\ell$, after computing the gradient, the reduced matrices
$\hat{\theta}_{\ell}:=(\hat A_{(\ell)},\hat B_{(\ell)},\hat C_{(\ell)})$
are updated to satisfy the Armijo condition
\begin{align}
    J_{L,\lambda}(\bar\theta)
    \leq
    J_{L,\lambda}(\hat\theta_{\ell})
    -
    c_1 \alpha_{\ell}
    \|
    \nabla J_{L,\lambda}(\hat \theta_{\ell})
    \|^2
\end{align}
in the backtracking loop (Lines 5--10).

The convergence of Algorithm~\ref{alg:ABCunknown} is guaranteed by the following standard convergence result.
\begin{theorem}\label{thm:tlh2_convergence}
    Assume that the sequence $\{\hat\theta_\ell\}$ with
    $\hat\theta_\ell=(\hat A_{(\ell)},\hat B_{(\ell)},\hat C_{(\ell)})$
    generated by Algorithm~\ref{alg:ABCunknown} with $\mathrm{tol}=0$ is bounded.
    Then $\{\hat\theta_\ell\}$ converges to a stationary point of Problem~\eqref{prob:dt_lti_tlh2_mod_reg}.
\end{theorem}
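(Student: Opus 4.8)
The plan is to treat Algorithm~\ref{alg:ABCunknown} as a steepest-descent scheme with Armijo backtracking on the unconstrained Euclidean domain $\mathcal E$, and to exploit that the objective $f_{L,\text{data}}$ is a polynomial---hence real-analytic---in the entries of $\hat\theta=(\hat A,\hat B,\hat C)$, as is clear from~\eqref{eq:obj_chatbase} and from the gradient formulas in Theorem~\ref{thm:tl_gradients}. A classical descent argument will only yield $\|\nabla f_{L,\text{data}}(\hat\theta_\ell)\|\to 0$ together with the fact that every limit point of $\{\hat\theta_\ell\}$ is stationary; to upgrade this to convergence of the \emph{entire} sequence to a single stationary point, I would invoke a \L{}ojasiewicz-type argument (in the spirit of Absil--Mahony--Andrews), which is available precisely because $f_{L,\text{data}}$ is analytic. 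Throughout I abbreviate $g_\ell:=\nabla f_{L,\text{data}}(\hat\theta_\ell)$.

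First I would extract the consequences of the boundedness hypothesis. Since $\{\hat\theta_\ell\}$ is bounded, there is a compact set $K\subset\mathcal E$ containing all iterates together with the finitely many trial points generated during backtracking (these are bounded perturbations of the iterates, because $\nabla f_{L,\text{data}}$ is bounded on the closure of $\{\hat\theta_\ell\}$ and the initial step length is fixed). On $K$ the polynomial gradient $\nabla f_{L,\text{data}}$ is Lipschitz with some constant $L_{\mathrm{g}}$, and $f_{L,\text{data}}$ is continuous, hence bounded below along the iterates. Next I would show that the backtracking loop (Lines~6--15) terminates and returns step sizes bounded away from $0$: termination for small $\alpha$ follows from the first-order expansion $f_{L,\text{data}}(\hat\theta_\ell-\alpha g_\ell)=f_{L,\text{data}}(\hat\theta_\ell)-\alpha\|g_\ell\|^2+o(\alpha)$ (with Proposition~\ref{prop:existence_of_alpha} guaranteeing $\bar\theta\in\mathcal E$), while the descent lemma applied with $L_{\mathrm{g}}$ shows the Armijo condition is met whenever $\alpha\le 2(1-c_1)/L_{\mathrm{g}}$. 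Hence the accepted steps satisfy $0<\alpha_{\min}\le\alpha_\ell\le\alpha_{\max}$, with $\alpha_{\max}$ the initial step and $\alpha_{\min}:=\min\{\alpha_{\max},\,\beta\,2(1-c_1)/L_{\mathrm{g}}\}$ for the backtracking factor $\beta\in(0,1)$.

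Summing the Armijo inequality $f_{L,\text{data}}(\hat\theta_{\ell+1})\le f_{L,\text{data}}(\hat\theta_\ell)-c_1\alpha_\ell\|g_\ell\|^2$ and using that $\{f_{L,\text{data}}(\hat\theta_\ell)\}$ is nonincreasing and bounded below gives $\sum_\ell\alpha_\ell\|g_\ell\|^2<\infty$; the bound $\alpha_\ell\ge\alpha_{\min}$ then yields $\sum_\ell\|g_\ell\|^2<\infty$, so $\|g_\ell\|\to 0$. By compactness the sequence has at least one limit point $\hat\theta^\star$, and continuity of the gradient gives $\nabla f_{L,\text{data}}(\hat\theta^\star)=0$; thus every limit point is a stationary point of Problem~\eqref{prob:dt_lti_tlh2_mod_data}.

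The main obstacle is the final upgrade from ``every limit point is stationary'' to convergence of the whole sequence, because the stationary set of the polynomial $f_{L,\text{data}}$ need not be discrete. Here analyticity is essential. Since $\hat\theta_{\ell+1}-\hat\theta_\ell=-\alpha_\ell g_\ell$, the Armijo inequality rewrites as the strong-descent estimate $f_{L,\text{data}}(\hat\theta_\ell)-f_{L,\text{data}}(\hat\theta_{\ell+1})\ge c_1\|g_\ell\|\,\|\hat\theta_{\ell+1}-\hat\theta_\ell\|$, and a vanishing gradient forces $\hat\theta_{\ell+1}=\hat\theta_\ell$. These are exactly the hypotheses under which the \L{}ojasiewicz gradient inequality (valid at every point of an analytic function) forces $\{\hat\theta_\ell\}$ to have finite length and hence to be Cauchy, so the whole sequence converges to the single stationary limit $\hat\theta^\star$. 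I expect the only delicate bookkeeping to be pinning down the uniform constants $L_{\mathrm{g}}$, $\alpha_{\min}$, and $\alpha_{\max}$ and verifying that the backtracking rule respects them; both are routine once everything is restricted to the compact set $K$.
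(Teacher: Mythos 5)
Your proof is correct, and it follows the same overall skeleton as the paper's --- boundedness gives a compact set $K$ on which the polynomial gradient is Lipschitz, backtracking step sizes are bounded below via the descent lemma, the Armijo inequality gives summable decrease, and real-analyticity of $f_{L,\text{data}}$ supplies the \L{}ojasiewicz-type machinery that upgrades subsequence convergence to convergence of the whole sequence. The one genuine difference is the key theorem invoked at the end. The paper verifies the sufficient-decrease condition
\begin{align}
    f_{L,\text{data}}(\hat\theta_{\ell+1})\le f_{L,\text{data}}(\hat\theta_{\ell})-\frac{c_1}{\alpha_{\text{init}}}\|\hat\theta_{\ell+1}-\hat\theta_{\ell}\|^{2}
\end{align}
and the relative-error condition $\|\nabla f_{L,\text{data}}(\hat\theta_{\ell+1})\|\le(\underline{\alpha}^{-1}+L_{\nabla})\|\hat\theta_{\ell+1}-\hat\theta_{\ell}\|$, then appeals to the Kurdyka--\L{}ojasiewicz framework of Attouch--Bolte--Svaiter \cite[Thm.~3.2]{attouch2013convergence}, whose conclusion already delivers convergence to a critical point. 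You instead rewrite Armijo as the strong-descent estimate $f_{L,\text{data}}(\hat\theta_\ell)-f_{L,\text{data}}(\hat\theta_{\ell+1})\ge c_1\|g_\ell\|\,\|\hat\theta_{\ell+1}-\hat\theta_\ell\|$ and invoke the Absil--Mahony--Andrews theorem for analytic costs; since that theorem only yields single-point convergence of the iterates, you correctly supplement it with the separate argument $\sum_\ell\|g_\ell\|^2<\infty\Rightarrow\|g_\ell\|\to0$ to certify that the limit is stationary --- a step the paper's route gets for free from the KL theorem, which in turn is why the paper needs the relative-error bound that you never use. The two routes are essentially equivalent here (the KL framework generalizes the analytic \L{}ojasiewicz setting), and your version is arguably more self-contained about why every limit point is stationary. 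One shared elision worth noting: for the descent lemma you need the gradient to be Lipschitz on the segments joining iterates to trial points, so $K$ should be taken as a compact \emph{convex} set (e.g., the closed convex hull of iterates and trial points); your remark about including the backtracking trial points in $K$ comes closer to addressing this than the paper does, but neither argument makes the convexity explicit. Since the gradient of a polynomial is Lipschitz on any bounded set, this is a cosmetic fix in both cases.
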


\begin{proof}
    It follows from~\cite[Thm. 3.2]{attouch2013convergence}.
    See Appendix.
\end{proof}

\begin{remark}
When $(A,B,C)$ are known, one can construct the same algorithm as Algorithm~\ref{alg:ABCunknown} by using the gradient in Proposition~\ref{prop:tl_gradients} for the data-fitting term together with the gradient of the regularization term.
In particular, if equation~\eqref{eq:impulse_def} holds and
$\hat{\theta}_{1}$,
$\alpha_{\mathrm{init}}$,
$\beta$,
$c_1$,
$\lambda$, and
$\mathrm{tol}$
are fixed, then the sequence generated by Algorithm~\ref{alg:ABCunknown} coincides with the sequence generated by the corresponding model-based regularized $h^2$ MOR algorithm.

On the other hand, as stated in Remark~\ref{rem:comparison_gradient}, using Proposition~\ref{prop:tl_gradients} for the data-fitting term yields a gradient computation that depends on $n$.
In particular, solving the Sylvester equation~\eqref{eq:inf_sylve_R} using the Bartels--Stewart method~\cite{simoncini2016computational} costs $\mathcal O(n^3)$, and this term becomes dominant when $n$ is large.
Algorithm~\ref{alg:ABCunknown} is based on the gradient in Theorem~\ref{thm:tl_gradients} and does not depend on $n$.
\end{remark}

\begin{remark}\label{rem:how_to_choose_initial_point}
Since Problem~\eqref{prob:dt_lti_tlh2_mod_reg} is nonconvex, the choice of the initial point is important when applying Algorithm~\ref{alg:ABCunknown}. 
Methods such as ERA~\cite{juang1985eigensystem,H_Tu_2014} and data-driven balanced truncation~\cite{gosea2022data_2} can also construct ROMs for discrete-time LTI systems from impulse response data alone. 
As shown in Section~\ref{sec:experiments}, using a ROM obtained by such methods as an initial point 
\tblack{leads to a model with a lower time-limited $h^2$ objective value than using a randomly generated initial point.}
\end{remark}

\begin{figure}[!t]
\begin{algorithm}[H]
    \caption{Gradient Method for Problem~\eqref{prob:dt_lti_tlh2_mod_reg}}
    \label{alg:ABCunknown}
    \begin{algorithmic}[1]
    \REQUIRE Length-$L$ noisy impulse response data $\mathcal{D}_L$, kernel matrix $K$, regularization parameter $\lambda\geq 0$, initial point $\hat{\theta}_{1}\in \mathcal{E}$, initial step-size $\alpha_{\mathrm{init}}>0$, backtracking parameter $\beta\in(0,1)$, Armijo parameter $c_1>0$, tolerance $\mathrm{tol}>0$
    \ENSURE $\hat{\theta}\in \mathcal{E}$
    \FOR{$\ell=1,2,\ldots$}
        \STATE Compute $J_{\ell}:=J_{L,\lambda}(\hat{\theta}_{\ell})$ and $g_{\ell}:=\nabla J_{L,\lambda}(\hat{\theta}_{\ell})$ using \eqref{eq:obj_data_noisy}, Theorem~\ref{thm:tl_gradients}, and $\mathcal R(\hat{\theta}_{\ell})$
        \STATE \textbf{if} $\|g_{\ell}\| < \mathrm{tol}$ \textbf{then break}
        \STATE $\alpha_{\ell}=\alpha_{\mathrm{init}}$
        \WHILE{true}
            \STATE $\bar{\theta}=\hat{\theta}_{\ell}-\alpha_{\ell}g_{\ell}$
            \STATE \tblack{Compute $\bar{J}:=J_{L,\lambda}(\bar{\theta})$ using \eqref{eq:obj_data_noisy} and $\mathcal R(\bar{\theta})$}
            \STATE \textbf{if} $\bar{J}\leq J_{\ell}-c_1\alpha_{\ell}\|g_{\ell}\|^2$ \textbf{then} $\hat{\theta}_{\ell+1}=\bar{\theta}$; \textbf{break}
            \STATE $\alpha_{\ell}\leftarrow \beta\alpha_{\ell}$
        \ENDWHILE
    \ENDFOR
    \end{algorithmic}
\end{algorithm}
\end{figure}
\section{Experiments}\label{sec:experiments}
\tblack{
In this section, we evaluate the proposed method using the CD player model in the SLICOT benchmark collection~\cite{ChahlaouiVanDooren2002}.
In Algorithm~\ref{alg:ABCunknown}, we set $\alpha_{\mathrm{init}}=1$, $\beta=0.5$, $c_1=10^{-4}$, and $\mathrm{tol}=10^{-5}$.
In all experiments, we evaluate the obtained ROMs by the relative error
$\frac{\|H-\hat{H}\|_{h^2_L}}{\|H\|_{h^2_L}}$.
Since the CD player benchmark is given as a continuous-time LTI system, we discretize it by zero-order hold with sampling time $5\times 10^{-2}$ and obtain system~\eqref{eq:dt_lti} with $n=120$, $m=2$, and $p=2$.
We set the horizon length to $L=500$ and the reduced order to $r=10$.
To assess robustness to noise, we generate the noisy impulse response data $\mathcal D_{L}$ by adding Gaussian noise scaled by $h_{\mathrm{scale}} := \max_{0 \leq k \leq L-1} \|h[k]\|$.
Specifically, we define $\eta[k]:=\sigma h_{\mathrm{scale}} \epsilon[k]$ in \eqref{eq:noisy_data}, where the entries of $\epsilon[k]$ are independently drawn from the standard normal distribution.
As the initial point, we use the ROM generated by ERA.
}

\tblack{
We first consider the case of small noise ($\sigma=10^{-3}$), that is, a situation where system identification has been successful and a high-precision impulse response has been obtained.
Figure~\ref{fig:cd_smallnoise} shows the convergence behavior of Algorithm~\ref{alg:ABCunknown} with $\lambda=0$ for two initializations: ERA-based initialization and random initialization.
In this case, the relative error is improved from these initial points, and 
\tblack{the proposed method converges to ROMs with small relative time-limited $h^2$ errors.}
In particular, \tblack{using ERA as the initial point allowed us to construct a ROM with a lower relative time-limited \(h^2\) error.}
This result indicates that, when the noise level is sufficiently small, the proposed method can still work well even without regularization.
Note that, in the noisy case, the proposed algorithm decreases the surrogate objective $J_{L,\lambda}(\hat \theta)$ constructed from noisy impulse response data, rather than the true time-limited $h^2$ MOR objective $f_{L}(\hat\theta)$. Therefore, although the surrogate objective can be reduced along the iterations, monotonic decrease of the true time-limited $h^2$ error is not guaranteed in general.}
\tblack{This discrepancy can explain the oscillatory behavior of the relative time-limited $h^2$ error observed in Figures~\ref{fig:cd_smallnoise} and~\ref{fig:cd_largenoise_conv}.}

\begin{figure}[htbp]
    \centering
    \includegraphics[width=8.25cm]{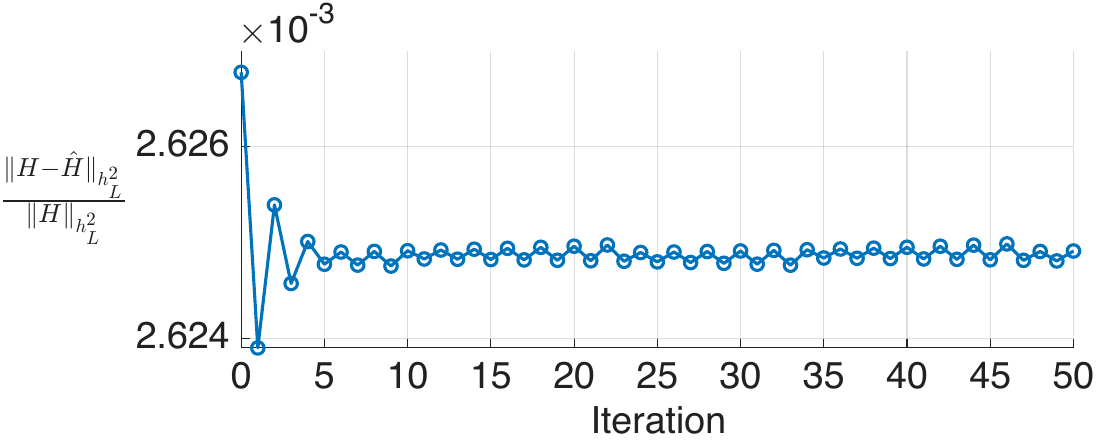}\\[2mm]
    \includegraphics[width=8.25cm]{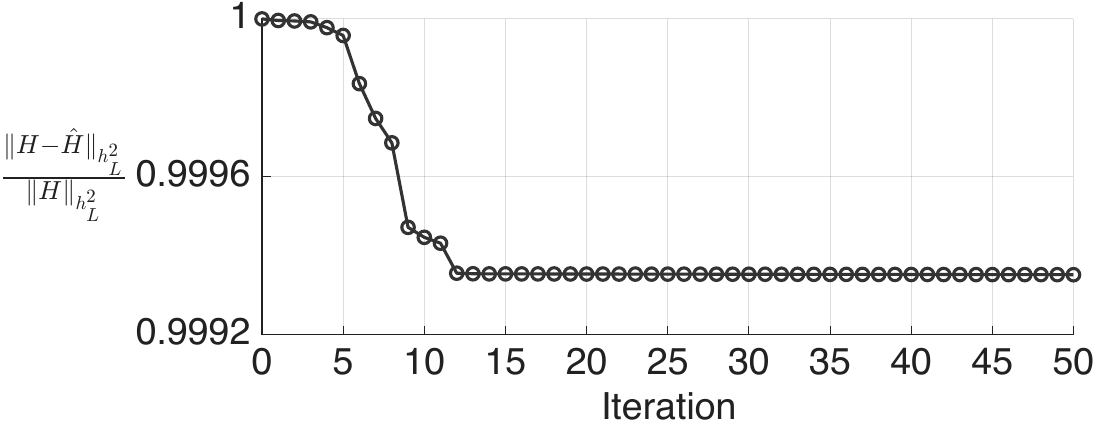}
    \caption{Convergence behavior of Algorithm~\ref{alg:ABCunknown} using impulse response data corrupted by small noise ($\sigma=10^{-3}$). The upper plot shows the result with ERA-based initialization, and the lower plot shows the result with random initialization.}
    \label{fig:cd_smallnoise}
\end{figure}

We next consider the case of large noise ($\sigma=0.5$) and investigate the effect of regularization. Note that we use the TC kernel~\eqref{eq:tc_kernel} with $\alpha_K=0.99$.
Figure~\ref{fig:cd_largenoise_conv} shows the convergence behavior of the proposed method for $\lambda=0$, $\lambda=0.0005$, $\lambda=0.001$, and $\lambda=0.002$.
In the unregularized case $\lambda=0$, the algorithm converges to a solution whose relative error is worse than that of the ERA-based initialization. This is presumably because the algorithm overfits the surrogate objective constructed from noisy impulse response data, which does not necessarily lead to a decrease in the true time-limited $h^2$ error.
In contrast, when $\lambda>0$, the proposed method converges to solutions whose relative errors are improved from the initial point.
These results indicate that regularization is effective in suppressing overfitting to noisy impulse response data and in 
\tblack{guiding the optimization toward ROMs with lower relative time-limited $h^2$ errors.}

\begin{figure}[htbp]
    \centering
    \includegraphics[width=8.25cm]{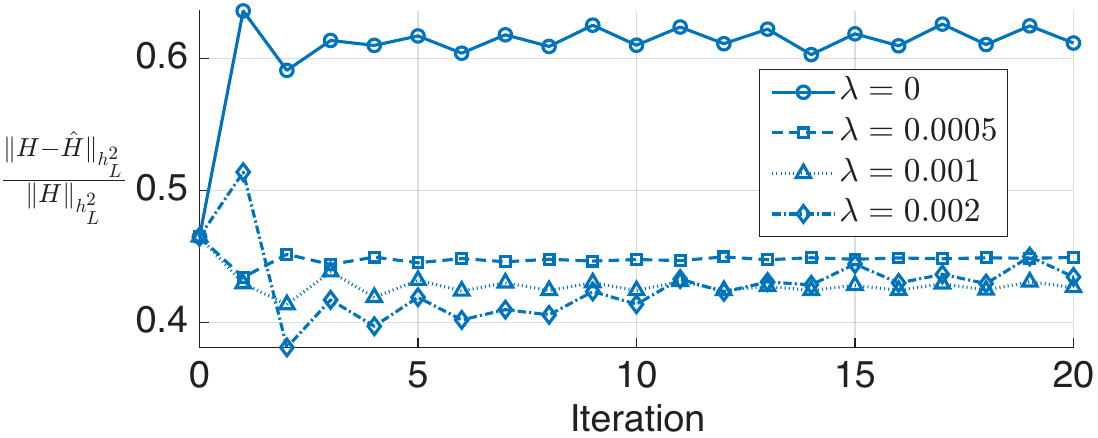}

    \vspace{2mm}

    \includegraphics[width=8.25cm]{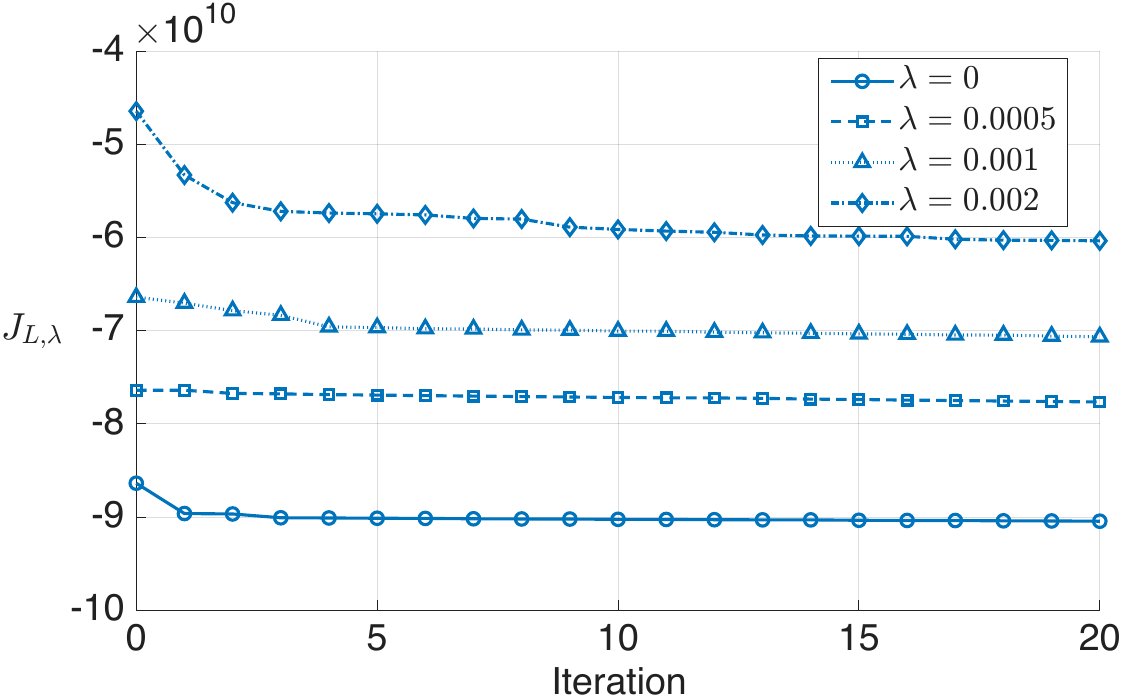}
    \caption{Convergence behavior of Algorithm~\ref{alg:ABCunknown} for the CD player model using impulse response data corrupted by large noise ($\sigma=0.5$). The upper plot shows the relative time-limited $h^2$ error, and the lower plot shows the objective value of the optimization problem.}
    \label{fig:cd_largenoise_conv}
\end{figure}

\tblack{For comparison, we also applied the method SysID+ERA, which is described in Remark~\ref{rem:comparison_sysID}.
As shown in Table~\ref{tab:cd_largenoise_reg}, among the tested methods, the proposed regularized method attained the best performance.
Note that, for SysID+ERA and Algorithm~\ref{alg:ABCunknown}, the regularization parameter was selected from several candidates so as to achieve the best performance.
This result suggests that directly solving the regularized time-limited $h^2$ MOR problem from noisy impulse response data is more effective than first estimating the impulse response and then constructing a reduced-order realization.}
\tblack{However, for the extreme noise level $\sigma=50$, the relative error remains large even with the proposed regularization.
This indicates that the method can mitigate overfitting to noisy impulse responses, whereas its performance may deteriorate when the effective signal-to-noise ratio becomes too low.}

\begin{table}[htbp]
    \caption{Relative time-limited $h^2$ errors of SysID+ERA, ERA, and Alg.~\ref{alg:ABCunknown} for the CD player model under different noise levels.}
    \label{tab:cd_largenoise_reg}
    \centering
    \resizebox{0.95\linewidth}{!}{
    \begin{tabular}{c|c|c|c}
         & SysID+ERA & ERA & Alg.~\ref{alg:ABCunknown} with ERA \\
        \hline
        $\sigma=0.5$ & $0.457$ & $0.465$ & $\mathbf{0.427}$ \\
        \hline
        $\sigma=5$ & $4.63$ & $4.64$ & $\mathbf{1.90}$ \\
        \hline
        $\sigma=50$ & $45.10$ & $45.17$ & $\mathbf{22.42}$
    \end{tabular}
    }
\end{table}

\section{Conclusion}\label{sec:conclusion}
We developed a data-driven \tblack{regularized} time-limited $h^2$ model reduction method for discrete-time LTI systems using \tblack{noisy} impulse response data.
In particular, we showed that the objective function and its gradient can be represented using only impulse response data, without using the original system matrices.
Numerical experiments demonstrated that the proposed regularized method is effective in situations where the unregularized method may deteriorate under noise.

\section*{Acknowledgment}
This work was supported by JSPS KAKENHI under Grant Numbers 23K28369 and 25KJ0986.

\appendices
\section*{APPENDIX}
\makeatletter
\renewcommand{\appendixname}{}
\renewcommand{\thesectiondis}[2]{\Alph{section}.}
\makeatother

\begin{proof}[Proof of Theorem~\ref{thm:tl_gradients}]
    Let $\tilde E_k := \hat C \hat A^k \hat B - \tilde h[k]$. Then, $\tilde f_{L,\mathrm{data}}$ is written as
    $\tilde f_{L,\mathrm{data}}=\sum_{k=0}^{L-1}\tr(\tilde E_k^{\top}\tilde E_k)-\sum_{k=0}^{L-1}\tr(\tilde h[k]^{\top}\tilde h[k])$.
    Hence,
    \begin{align}
        \mathrm{d}\tilde f_{L,\mathrm{data}}=2\sum_{k=0}^{L-1}\tr\!\left(\tilde E_k^{\top}\,\mathrm{d}\hat h[k]\right).
    \end{align}
    Furthermore, since $\mathcal{R}(\hat\theta)=\tr(\hat G^{\top}K^{-1}\hat G) = \sum_{k=0}^{L-1}\tr\!\left(\hat h[k]^{\top}\hat{\xi}[k]\right)$ and $K^{-1}$ is symmetric, we have 
    \begin{align}
        \mathrm{d}\mathcal{R}(\hat\theta)=2\sum_{k=0}^{L-1}\tr\!\left(\hat\xi[k]^{\top}\,\mathrm{d}\hat h[k]\right).
    \end{align}
    
    For $\hat A$, with $\hat B$ and $\hat C$ fixed, we have
    \begin{align}
        \mathrm{d}\hat h[k]
        &=
        \hat C\,\mathrm{d}(\hat A^k)\hat B =
        \hat C
        \left(
        \sum_{i=0}^{k-1}
        \hat A^{k-1-i}(\mathrm{d}\hat A)\hat A^i
        \right)\hat B.
    \end{align}
    Hence,
    \begin{align}
        \mathrm{d}J_{L,\lambda}
        &=\mathrm{d}\tilde f_{L,\mathrm{data}}+\lambda \mathrm{d}\mathcal{R}(\hat\theta) \\
        &=2\sum_{k=1}^{L-1}\sum_{i=0}^{k-1}
        \tr\!\left(
        \big(
        \tilde E_k+\lambda \hat\xi[k]
        \big)^{\top}
        \hat C\hat A^{k-1-i}
        (\mathrm{d}\hat A)\hat A^i\hat B
        \right).
    \end{align}
    Thus, by the definition of the gradient $\mathrm{d}J_{L,\lambda}=\tr\!\left((\nabla_{\hat A}J_{L,\lambda})^{\top}\mathrm{d}\hat A\right)$ and the properties of traces, \eqref{eq:J_grad_Ahat} holds.

    Differentiating $\tilde f_{L,\mathrm{data}}$ in~\eqref{eq:obj_data_noisy} and $\mathcal{R}(\hat\theta)$ with respect to $\hat B$ and $\hat C$ yields equations~\eqref{eq:J_grad_Bhat} and~\eqref{eq:J_grad_Chat}.
\end{proof}

\begin{proof}[Proof of Theorem~\ref{thm:tlh2_convergence}]
    We define the closure
    $\mathcal K:=\overline{\{\hat{\theta}_\ell\mid\ \ell\ge1\}}$ of the sequence generated by Algorithm~\ref{alg:ABCunknown}.
    Since $\{\hat{\theta}_{\ell}\}$ is bounded, $\mathcal K$ is a compact subset of ${\mathcal E}$.
    Furthermore, since $J_{L,\lambda}$ is smooth, $\nabla^2 J_{L,\lambda}$ is continuous on $\mathcal K$; hence, by the extreme value theorem,
    $L_{\nabla}:=\sup_{x\in \mathcal K}\|\nabla^2 J_{L,\lambda}(x)\|_{\mathrm{op}}<\infty$, where $\|\cdot \|_{\mathrm{op}}$ denotes the induced operator norm.
    Therefore, for all $x,y\in \mathcal K$,
    \begin{align}\label{eq:lip}
        \|\nabla J_{L,\lambda}(x)-\nabla J_{L,\lambda}(y)\|
        \le L_{\nabla}\|x-y\|.
    \end{align}

    Next, letting $s_{\ell}:=\hat{\theta}_{\ell+1}-\hat{\theta}_{\ell}
        =-\alpha_{\ell}\nabla J_{L,\lambda}(\hat{\theta}_{\ell})$,
    we obtain from the Armijo condition in Algorithm~\ref{alg:ABCunknown} that
    \begin{align}\label{eq:ABS_H1}
        J_{L,\lambda}(\hat\theta_{\ell + 1})
        &\leq J_{L,\lambda}(\hat{\theta}_{\ell})
        -
        \frac{c_1}{\alpha_{\mathrm{init}}}\|s_{\ell}\|^{2}.
    \end{align}

    By the Descent lemma~\cite[Lem.~3.1]{attouch2013convergence} and~\eqref{eq:lip}, the Armijo condition is satisfied whenever $\alpha \le \frac{2(1-c_1)}{L_{\nabla}}$.
    Hence, the backtracking procedure terminates, and the accepted step size satisfies $\alpha_{\ell}\geq \underline{\alpha}:=\beta\cdot\min \left\{\alpha_{\mathrm{init}}, \frac{2(1-c_1)}{L_{\nabla}}\right\}$.
    As a result, we obtain
    \begin{align}
        \|\nabla J_{L,\lambda}(\hat{\theta}_{\ell + 1})\|
        &\leq \|\nabla J_{L,\lambda}(\hat{\theta}_{\ell})\| + L_{\nabla}\|s_\ell\| \\
        &\leq (\underline{\alpha}^{-1}+ L_{\nabla})\|s_\ell\|,\label{eq:ABS_H2}
    \end{align}
    where the first inequality follows from~\eqref{eq:lip}, and the second from the definition of $s_\ell$ and the lower bound of $\alpha_\ell$.

    Lastly, $\{\hat{\theta}_{\ell}\}$ is bounded and, by the Bolzano--Weierstrass theorem, there exists a convergent subsequence $\{\hat\theta_{\ell_{j}}\}$ whose limit lies in $\mathcal K$.
    Furthermore, since $J_{L,\lambda}$ is real-analytic, it is a KL function.
    Hence, by \eqref{eq:ABS_H1}, \eqref{eq:ABS_H2}, and~\cite[Thm.~3.2]{attouch2013convergence}, Algorithm~\ref{alg:ABCunknown} converges to a stationary point in $\mathcal K$.
\end{proof}


\bibliographystyle{IEEEtran}
\bibliography{main.bib}

\end{document}